\newtheorem{theorem}{Theorem}
\title{On Computation Complexity of True Proof Number Search}
\author{
	Chao Gao
    \affiliations
    University of Alberta~\footnote{Extended work of a section in [Gao, 2020].}
    \emails
    cgao3@ualberta.ca
}
\begin{document}

\maketitle

\begin{abstract}
We point out that the computation of true \emph{proof} and \emph{disproof} numbers for proof number search in arbitrary directed acyclic graphs is NP-hard, an important theoretical result for proof number search. The proof requires a reduction from SAT, which demonstrates that finding true proof/disproof number for arbitrary DAG is at least as hard as deciding if arbitrary SAT instance is satisfiable, thus NP-hard.    
\end{abstract}

\section{Introduction}
Solving games is an important work direction for artificial intelligence research. Proof number search (PNS)~\cite{allis1994searching,allis1994proof} was developed specialized for solving games, drawing inspiration from conspiracy number search~\cite{mcallester1988conspiracy}. Unlike Alpha-Beta based fixed-depth pruning~\cite{knuth1975analysis}, PNS is a best-first search that iteratively adjusts its search path by \emph{proof} and \emph{disproof} numbers, making it a stronger alternative for solving games especially in the presence of deep and narrow plays. The notion of proof and disproof numbers for node $x$ is used to express the minimum number of descending leaf nodes $x$ has to solve in order to prove that $x$ is a win and loss respectively. 
In standard implementation, PNS initializes proof and disproof numbers of leaf nodes as $(1,1)$ and then backup these numbers recursively through an \emph{sum} operation, though there are enhancements trying to establish more informative initialization of proof and disproof numbers at node creation~\cite{breuker1998memory,breuker1999pn2,winands2004effective}.  
Depth-first proof number search (DFPN)~\cite{nagai2002df} is a PNS variant that adopts
two thresholds to avoid unnecessary traversal of the search tree; it has the same behavior as PNS in trees, but exhibits lower memory footprint at the expense of re-expansion; it can be further improved by incorporating various general or game-dependent techniques. Yoshizoe et al~\cite{yoshizoe2007lambda} introduced
$\lambda$ search to DFPN to solve the capturing problems in Go; 
threshold controlling and source node detection~\cite{kishimoto2010dealing} were introduced to DFPN to deal with a variety of issues from Tsume-Shogi. 

Together with other game-specific or game-independent algorithmic developments, PNS and their variants have been used for successfully solving a number of games, e.g.,  Gomoku~\cite{Allis1996GoMokuSB}, checkers~\cite{schaeffer2007checkers}, and small board size Hex~\cite{pawlewicz2013scalable,gao2017focused}. Furthermore, since PNS was developed by modeling game-searching as AND/OR trees~\cite{nilsson1980principles}, the algorithm has also been applied for solving real-world problems that can be described as a form of AND/OR graphs~\cite{pearl1984heuristics}. A notable example is     
chemical synthesis~\cite{heifets2012construction,kishimoto2019depth}. 

One problem in PNS is that the recursively computed proof and disproof numbers are well-defined on AND/OR trees~\cite{allis1994searching}, but in practice, the AND/OR structured state-space graph of many problems --- including many two-player games --- is  a directed acyclic graph (DAG). In some domains, it has been shown that treating these underlying graph as a tree may cause serious \emph{over-counting} for both proof and disproof numbers, resulting huge proof/disproof number for an easy-to-solve node, consequently preventing PNS from solving the domain for a long time~\cite{kishimoto2010dealing,nagai2002df}. Algorithms with exponential complexity are known to find the true proof and disproof numbers at each node in DAGs~\cite{schijf1994proof}, but they are impractical even for toy problems. Heuristic techniques address this by replacing sum-cost with a variant of max-cost at each node~\cite{ueda2008weak}, identifying some specific cases and curing them individually~\cite{kishimoto2010dealing,nagai2002df}. This over-counting problem has been discussed extensively in~\cite{kishimoto2012game,kishimoto2014recursive}; however, the theoretical complexity for dealing with this problem has not been presented by previous researchers. In this paper, we aim to formally establish the computational difficulty of true proof number search in DAGs; we show that such a task is NP-hard. Note that an earlier version of this discussion has appeared in a section in~\cite{cg2020Thesis}. This paper provides a specific account of the hardness of PNS as an independent work without the involvement of other unrelated topics.    

\section{Preliminaries}
To make this paper self-contained, in this section we review preliminaries for PNS, then discuss the over-counting issue of PNS in DAGs. 
\subsection{Directed Acyclic AND/OR Graphs}
A directed acyclic AND/OR graph~\cite{pearl1984heuristics} is a DAG with an additional property that any edge coming out of a node is  labeled either as an OR or AND edge.  A node contains only OR outgoing edges is called an \emph{OR node}. Conversely, a node emitting only AND edges is called an \emph{AND node}. Any node emanating both AND and OR edges is called \emph{mixed node}. To distinguish, in graphic notation, all AND edges from the same node are often grouped using an arc. It is also easy to see that a \emph{mixed node} can be replaced with two pure AND and OR nodes; see Figure~\ref{fig:mixed_node}. Thus, for ease of presentation, in the remaining text of this paper, we assume that a directed acyclic AND/OR graph contains only pure AND and OR nodes. That is, we note the graph as a tuple $G=<$$V_o, V_a, E$$>$, where $V_o$ and $V_a$ are respectively the set of OR and AND nodes, and $E$ represents the set of directed edges.     

Directed acyclic AND/OR graphs are often used to represent the problem-solving as a series of problem reduction. For example, Figure~\ref{fig:and_or_graph_example} can be interpreted as that ``to solve problem $A$, either $B$ and $C$ must be solved, to solve $B$, both $D$ and $E$ have to be solved, while for solving $C$, only $E$ or $F$ needs to be solved''.

\begin{figure}[tph]
\centering
\includegraphics[scale=0.8]{./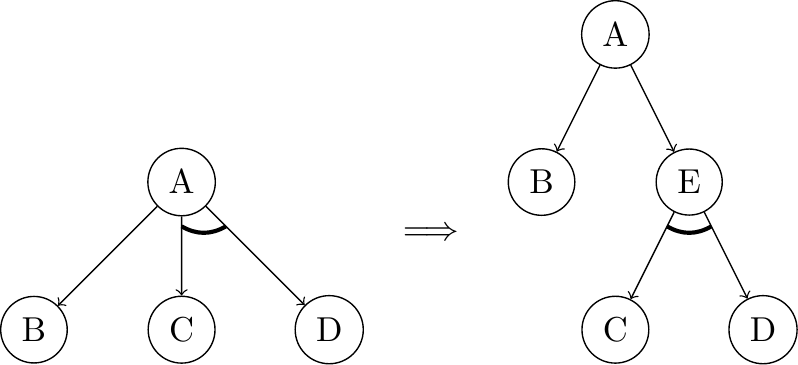}
\caption{Left: $A$ is a mixed node. Right: $A$ is an OR node; $E$ is an AND node. These two graphs are equivalent while the right one contains only pure AND and OR nodes.}
\label{fig:mixed_node}
\end{figure}

\begin{figure}[tph]
\centering
\includegraphics[scale=0.8]{./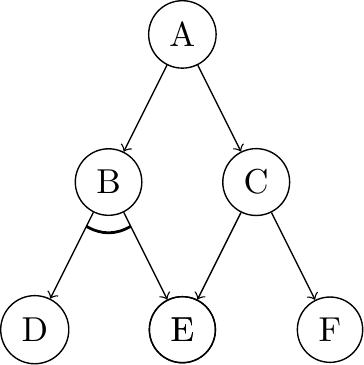}
\caption{Directed acyclic AND/OR graph represents problem-reduction. $B$ is the only AND node. $A$ and $C$ are OR nodes. $D$, $E$, and $F$ are leaf nodes that can be regarded as either AND or OR nodes.}
\label{fig:and_or_graph_example}
\end{figure}

Figure~\ref{fig:and_or_graph_example} also shows that if we assume $A$ is \emph{solvable}, in the best case, only one sub-problem $E$ is required to be \emph{solvable} to validate our assumption. Conversely, knowing only $E$ and $F$ both are \emph{unsolvable} is enough to say that $A$ is \emph{unsolvable}. In other words, the minimum number of leaf nodes to examine for \emph{proving} $A$ is 1, and $2$ for \emph{disproving}. The sub-graph that used to claim either $A$ is solvable or unsolvable is called \emph{solution-graph}. For Figure~\ref{fig:and_or_graph_example}, a \emph{solvable} solution-graph can be $\{A \to C \to E\}$, assuming $E$ is solvable; an \emph{unsolvable} solution-graph can be $\{A \to C \to E, C \to F\}$ assuming both $E$ and $F$ are unsolvable.

\subsection{Definition of Proof and Disproof Numbers}

Formally, given a graph $G$, $\forall x \in \{V_o, V_a\}$, define $p(x)$ and $d(x)$ respectively as the minimum number descending leaf nodes in order to solve for proving and disproving $x$ respectively.  
When graph $G$ is an AND/OR tree, the following recursive computation scheme exists~\cite{allis1994searching}:

\begin{equation} 
\begin{array}{l}
p(x) = 
\begin{cases}
1 \qquad \mbox{$n$ is non-terminal leaf node} \\ 
\min_{x_j \in ch(x)} p(x_j) \quad \mbox{$x$ is OR node} \\ 
\sum_{x_j \in ch(x)} p(x_j) \quad \mbox{$x$ is AND node} \\
\end{cases}
\\ \\ 
d(n)=
\begin{cases}
1 \quad \mbox{$x$ is non-terminal leaf node} \\
\min_{x_j \in ch(x)} d(x_j) \quad \mbox{$x$ is AND node} \\ 
\sum_{x_j \in ch(x)} d(x_j) \quad \mbox{$x$ is OR node} \\
\end{cases}
\end{array}
\label{eq:pns1}
\end{equation}
In Eq.~\eqref{eq:pns1}, $ch(x)$ represents the set of direct child successors for $x$. One scenario Eq.~\eqref{eq:pns1} fails to cover is when $x$ is a terminal leaf node, in which case proof and disproof numbers of $x$ are self-evident, as in Eq.~\eqref{eq:pns2}.  
\begin{equation}
\begin{array}{l}
p(x) = 
\begin{cases}
0 \qquad \mbox{$x$ is solvable} \\
\infty \qquad \mbox{$x$ is unsolvable} \\ 
\end{cases} 
\\ \\
d(x)=
\begin{cases}
0 \qquad \mbox{$x$ is unsolvable} \\
\infty \qquad \mbox{$x$ is solvable} \\ 
\end{cases}
\end{array}
\label{eq:pns2}
\end{equation}

Equipped with proof and disproof numbers, PNS conducts a best-first search repeatedly doing the following steps: 
\begin{enumerate}
\item Selection. Starting from the root, at each node $x$: 1) if $x$ is OR node, select a child node with the minimum $p$ value; 2) if $x$ is AND node, select a child node with the minimum $d$ value. Stop this until until $x$ becomes a leaf node.  
\item Evaluation and Expansion. An external function is called to check if the leaf is a terminal or not. If not, the leaf node is expanded and all its newly children are assigned with $(p, d) \gets (1,1)$.   
\item Backup. Updated proof and disproof numbers for the selected leaf node is back-propagated up to the tree according to~Eq.~\eqref{eq:pns1} and~\eqref{eq:pns2}.
\end{enumerate}

Given sufficient memory and computation time, it has been shown that PNS and DFPN are not only \emph{complete} for AND/OR trees but also acyclic AND/OR graphs~\cite{kishimoto2008completeness,allis1994searching}. That is, PNS terminates when the root node becomes a \emph{terminal}, i.e., its $(p,d)$ becomes either $(0,\infty)$ or $(\infty, 0)$, indicating respectively the root is \emph{solvable} or \emph{unsolvable}.

\subsection{Game-Tree and Game-DAG}
An AND/OR tree can be used to model game-tree~\cite{nilsson1980principles}, in which case, the state nodes where it is the first-player to play are OR nodes, and those of the second-player are AND nodes. 
Compared to general AND/OR trees, the additional regularity for an AND/OR tree of a two-player alternate-turn zero-sum game is that OR and AND  appear alternately in layers. That is, if $x$ is OR node, $\forall y \in ch(x)$, $y$ must be an AND node, and vice versa.

In this game context, the notion of $p(x)$ and $d(x)$ in Eq.~\eqref{eq:pns1} and~\eqref{eq:pns2} can be simplified as $\phi(x)$ and $\delta(x)$, which respectively represent the minimum number of non-terminal \emph{leaf} nodes to \emph{solve} in order to prove that $x$ is winning and losing.  Here, a node is said to be a \emph{winning} state if the player to play at that node wins (respectively for \emph{losing}).
By such, it becomes unnecessary to explicitly discern whether $x$ is an AND or OR node, since 
$\phi(x)$ would be solely dependent on $delta$ values of $ch(x)$, and $\delta(x)$ would be equal to the summed $\phi$ values of $ch(x)$.  
This simplified computation scheme is precisely expressed in Eq.~\eqref{eq:pns_phi_delta}. 
\begin{equation}
\begin{array}{l}
\phi(x) = 
\begin{cases}
1 \qquad \mbox{$x$ is non-terminal leaf node} \\ 
0 \qquad \mbox{$x$ is terminal winning state} \\ 
\infty \qquad \mbox{$x$ is terminal losing state} \\ 
\min\limits_{x_j \in ch(x)}  \delta(x_j) \\ 
\end{cases}
\\ \\
\delta(x)=
\begin{cases}
1 \qquad \qquad \mbox{$x$ is non-terminal leaf node} \\
\infty \qquad \mbox{$x$ is terminal winning state} \\ 
0 \qquad \mbox{$x$ is terminal losing state} \\ 
\sum_{x_j \in ch(x)} \phi(x_j)  \\
\end{cases}
\end{array} \label{eq:pns_phi_delta}
\end{equation}

Figure~\ref{fig:pns_example_phi_delta} shows an example \emph{game-tree}, where, according to PNS, node $j$ is to be selected for next node expansion, and after that, the ancestor nodes of $j$ will be updated due to the change in $j$. Note that, for better convenience, in games context, nodes of first players and second-player are respectively drawn using \emph{square} and \emph{circular} shapes, 
eliminating the use of arcs between edges for representing AND nodes.

\begin{figure}[tph]
\centering
\includegraphics[scale=0.8]{./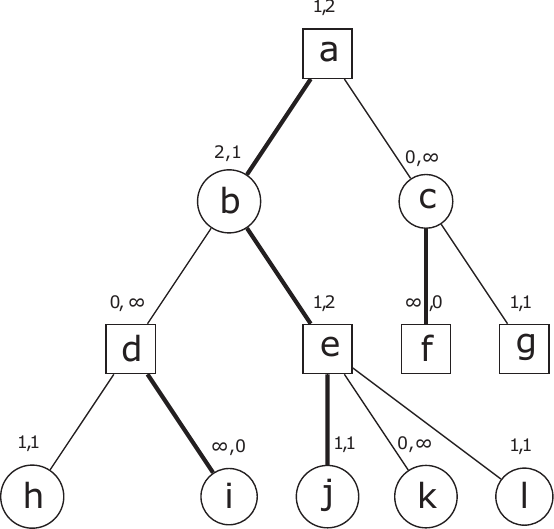}
\caption{PNS example in a game-tree. Each node has a pair of evaluations $(\phi, \delta)$, computed bottom up. A bold edge indicates a link where the minimum selection is made at each node, according to children's $\delta$ values.}
\label{fig:pns_example_phi_delta}
\end{figure}

In games, the existence of transpositions indicates that the state-space graphs are often a DAG rather than a tree. Although the computational convenience as in Eq.~\eqref{eq:pns_phi_delta} still holds due to the alternating regularity, as Eq.~\eqref{eq:pns1} for general AND/OR graphs, $\phi$ and $\delta$ no long represent true proof and disproof numbers. In the next section, we discuss the issue in detail.  

\subsection{Over-Counting in DAGs}
In a game-DAG, computing proof and disproof numbers via Eq.~\eqref{eq:pns_phi_delta} could over-count some non-terminal \emph{leaf} nodes multiple times. Figure~\ref{fig:pns_example_dag} shows an example where $E$ was counted twice at $A$. 

\begin{figure}[tp]
\centering
\includegraphics[scale=0.8]{./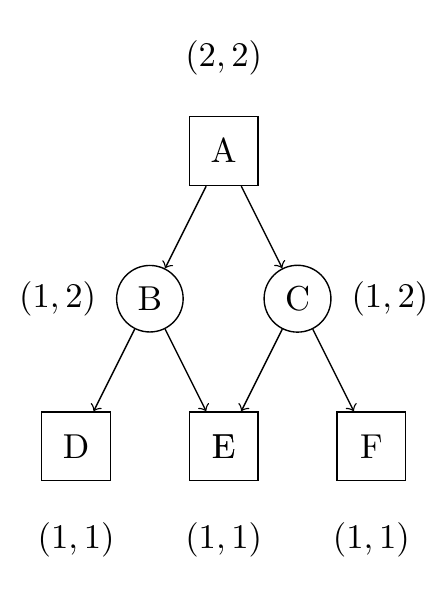}
\caption{PNS example in a game DAG. Solving $E$ to be a loss would imply both $B$ and $C$ are \emph{winning}, thus $A$ would be a loss; this implies that the $\delta$ value of $A$ should be $1$, but computation according~Eq.~\eqref{eq:pns_phi_delta} gives $\delta(A)=2$. This is because when summing the $\phi$ from $B$ and $C$, $E$ was counted twice.}
\label{fig:pns_example_dag}
\end{figure}

The over-counting problem in general AND/OR graphs can be extremely severe, since it is possible that a node might be counted an exponential number of times. 
Figures~\ref{fig:tree_lattice} and~\ref{fig:comb_lattice} are two examples where we can analytically see the drastic difference between the true proof number and the recursively computed number by Eq.~\eqref{eq:pns1}.     

\begin{figure}[tp]
\centering
\includegraphics[width=0.5\textwidth]{./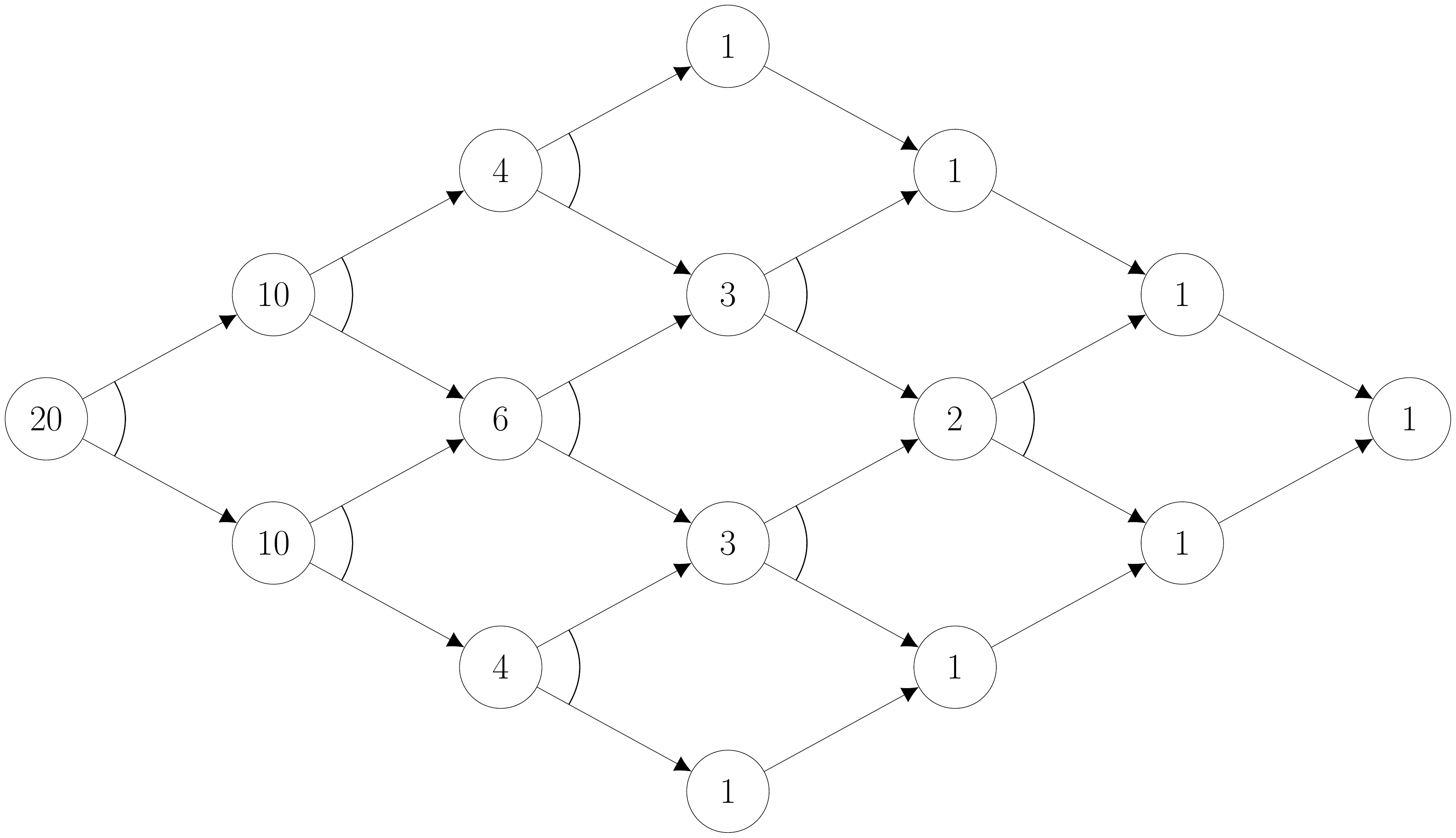}
\caption{An AND/OR graph in lattice form of $7$ layers. Every node is AND node.
For a graph of such having $n$ layers, the true proof number for the root is $1$, but computing it with Eq.~\eqref{eq:pns1} gives $\binom{n-1}{k}$, where $k=\frac{n-1}{2}$.} 
\label{fig:tree_lattice}
\end{figure}

\begin{figure}[tp]
\centering
\includegraphics[width=0.5\textwidth]{./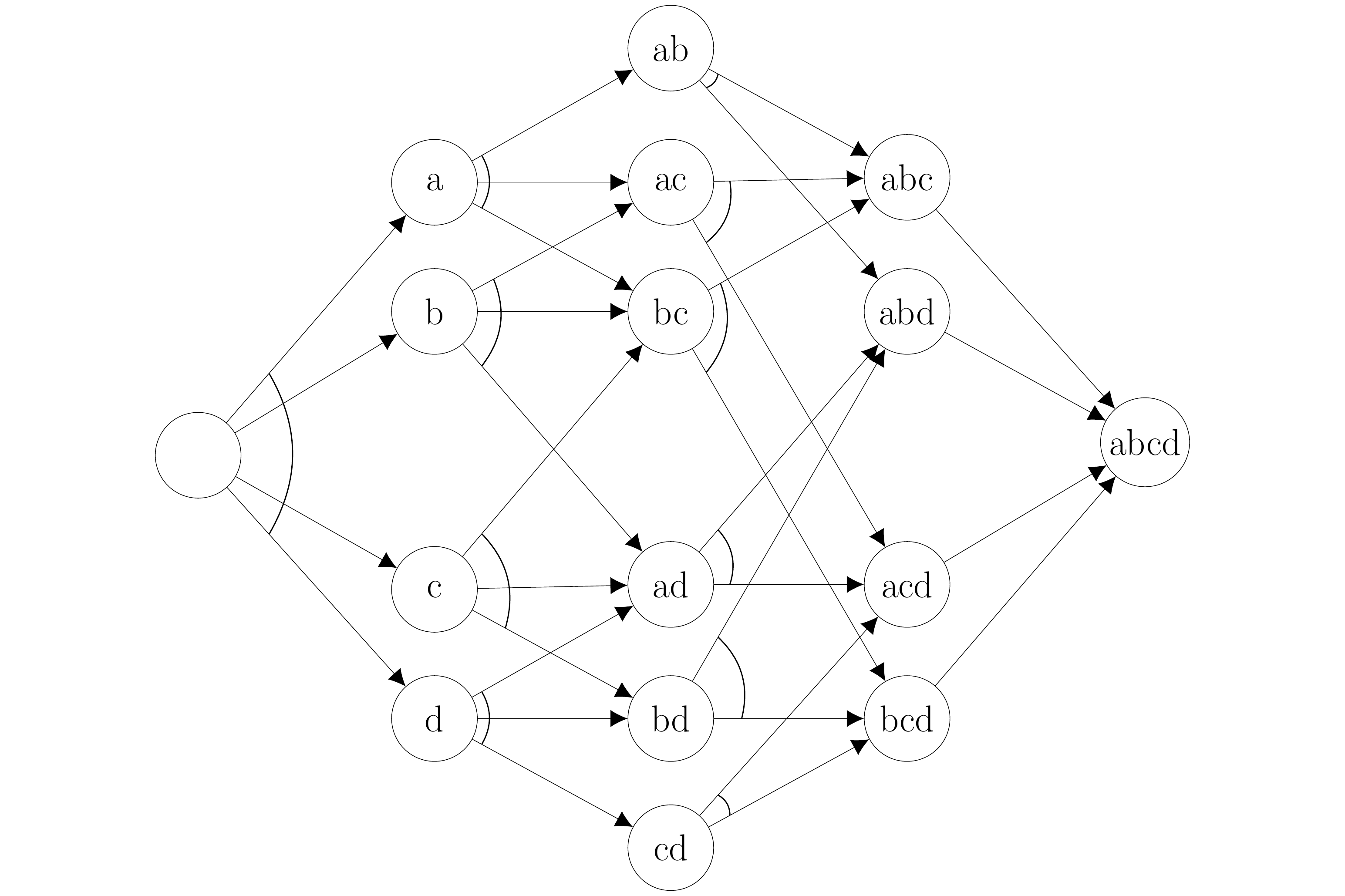}
\caption{An AND/OR graph in combinatorial lattice form of 5 layers.
For a graph of such having $n$ layers, the true proof number for the root is $1$, but computing it with Eq.~\eqref{eq:pns1} gives $(n-1)!$. } 
\label{fig:comb_lattice}
\end{figure}

A natural question then arises: for a layered DAG rooted at node $x$, what is the computational difficulty of calculating the true proof and disproof numbers for $x$? Even though many exact~\cite{schijf1994proof,muller2002proof} or heuristic~\cite{ueda2008weak,kishimoto2010dealing} methods have been carried out to address the \emph{over-counting} issue, formal proof on the theoretical difficulty of this task has been lacking. In the next section we prove this sub-task in proof number search is NP-hard. 

\section{NP-Hardness of Exact Computation}
We now show that the computation of true proof and disproof numbers in arbitrary AND/OR DAGs. This proof is closely related to proof of computationally difficulty for ``optimal solution graph'' from~\cite{sahni1974computationally}. Our contribution is bringing the classic results into the context of computing proof and disproof numbers, providing the heuristic search community a definite answer to an important question concerning PNS. 

First, we have the following observation.
\begin{theorem}
	Deciding whether a SAT instance is satisfiable can be reduced to finding the true proof (or disproof) number of an AND/OR graph.
	\label{theorem:sat_to_and_or_graph}
\end{theorem}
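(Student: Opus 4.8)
The plan is to exhibit a polynomial-time reduction from CNF-SAT to exact computation of the true proof number, so that a single evaluation of $p(\cdot)$ at a root node decides satisfiability. Given a CNF formula $\phi = C_1 \wedge \cdots \wedge C_m$ over variables $x_1,\ldots,x_n$, I would build an AND/OR DAG $G_\phi$ as follows. First introduce $2n$ non-terminal leaf nodes $T_1,F_1,\ldots,T_n,F_n$, where $T_j$ and $F_j$ encode the two truth values of $x_j$. For each variable add an OR node $X_j$ with $ch(X_j)=\{T_j,F_j\}$; for each clause add an OR node $C_i$ whose children are exactly the value-leaves matching its literals (a positive literal $x_j$ points to $T_j$, a negative literal $\neg x_j$ points to $F_j$). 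Finally let the root $r$ be an AND node with $ch(r)=\{X_1,\ldots,X_n,C_1,\ldots,C_m\}$. Because several clauses and the variable gadgets all point into the shared leaves $T_j,F_j$, $G_\phi$ is a genuine DAG with transpositions, and its size is clearly polynomial in $|\phi|$.

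The central claim I would prove is that the true proof number satisfies $p(r)=n$ when $\phi$ is satisfiable and $p(r)\ge n+1$ otherwise. Recall $p(r)$ is the minimum number of distinct non-terminal leaves appearing in any solution graph for $r$. Since $r\in V_a$, a solution graph must certify every child; since each $X_j\in V_o$, certifying it forces inclusion of exactly one of $T_j,F_j$, and these $n$ choices are distinct across $j$. Hence every solution graph already contains $n$ distinct variable-leaves encoding an assignment $a$, giving $p(r)\ge n$. Certifying each clause node $C_i\in V_o$ forces one further value-leaf, but that leaf may coincide with a variable-leaf already chosen --- precisely when $a$ satisfies $C_i$. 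Thus if $a$ satisfies $\phi$ then every clause reuses an existing leaf and $p(r)=n$; conversely if $\phi$ is unsatisfiable then for every assignment $a$ some clause is falsified, each of whose literal-leaves is the opposite of the corresponding variable-leaf and so lies outside the variable set, forcing at least one extra leaf and $p(r)\ge n+1$.

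Consequently, an oracle returning the true proof number decides satisfiability by the single test $p(r)\overset{?}{=}n$, establishing NP-hardness of exact proof-number computation. The disproof version follows from the AND/OR duality between $p$ and $d$ exhibited in Eq.~\eqref{eq:pns1}: interchanging the roles of AND and OR nodes (equivalently, swapping which of $\min$ and $\sum$ is applied) converts the same gadget into one whose true disproof number equals $n$ iff $\phi$ is satisfiable, so the statement holds for ``proof'' and ``disproof'' alike.

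The step I expect to be the main obstacle is making the lower bound $p(r)\ge n+1$ in the unsatisfiable case fully rigorous, since this is exactly the over-counting freedom the paper studies. A solution graph is permitted to let clause subgraphs share leaves with one another and with the variable gadgets, so I must rule out that clever sharing drives the distinct-leaf count below $n+1$ when no single assignment covers all clauses. Formally I would fix an arbitrary solution graph, read off the induced assignment $a$ from its $n$ forced variable-leaves, locate a clause $C_i$ falsified by $a$, and argue that the leaf $C_i$ contributes cannot already lie in the variable set --- so the union of selected leaves has size strictly greater than $n$. I would also take care to verify that the variable gadgets truly force $n$ distinct leaves, pinning the lower bound to exactly $n$ and keeping the threshold test clean.
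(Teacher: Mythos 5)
Your proposal is correct and follows essentially the same route as the paper: the identical gadget (an AND root $P$ over OR variable nodes $X_j$ with leaves $T_j,F_j$ and OR clause nodes sharing those literal leaves), the same threshold criterion that the true proof number equals $n$ iff the formula is satisfiable, and the same AND/OR swap for the disproof case. Your write-up is in fact more rigorous than the paper's on the unsatisfiable lower bound ($p(r)\ge n+1$), which the paper only asserts informally.
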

\begin{proof}
	Consider a SAT instance in \emph{conjunction normal form} $$P=\wedge_{i}^{k} C_i,$$ where each clause is a disjunction of literals, $C_i=\vee_{j} l_j$. Let $x_1, x_2, \ldots, x_n$ be all the variables, each literal $l_j$ is either $x_j$ or $\neg x_j$. Construct an AND/OR graph as follows. 
	\begin{enumerate}
		\item Let the start node be an AND node, denoted by $P$.
		\item $P$ contains $n+k$ successors $C_i$ and $X_j$, $\forall i=1, \ldots, k,~\forall j=1, \ldots, n$. They are all OR nodes.
		\item Each $X_j$ contains two successors $Tx_j$ and $Fx_j$ representing $x_j$ and $\neg x_j$ respectively. The successors of each $C_i$ are those literals that appear in that clause. 
	\end{enumerate}
	For such a graph, to satisfy the start node $P$, every clause must be satisfied and each variable node $X_j$ must be assigned to a value; thus, the minimum possible proof number for $P$ is $n$ --- in such a best case, to satisfy each $X_j$, only one of $Tx_j$ and $Fx_j$ is needed. This is equivalently to say finding if a SAT instance is satisfiable can be transformed into finding the true proof number of $P$ in this AND/OR graph.

	For disproof number, construct another graph by reverting the above graph such that all AND nodes are converted to OR nodes, and all OR nodes to AND nodes, then we see that to disprove $P$, it is sufficient to disprove either one in $\{C_1, \ldots, C_k, X_1, \ldots, X_n\}$; however, the minimum possible solution could be disprove $C_i$ if $\exists i =1, \ldots, k$ such that $C_i$ contains less than $2$ literals, otherwise $X_j$, $\forall j =1,\ldots, n$. For either case, this would lead the SAT instance unsatisfiable. So, this is equivalent to say finding if a SAT instance is unsatisfiable can be converted into finding the true disproof number of $P$ in the constructed AND/OR graph.  
\end{proof}

An example formula $P=(x_1 \vee x_2 \vee x_3 ) \wedge (\neg x_1 \vee \neg x_2 \vee \neg x_3) \wedge (\neg x_1 \vee x_2)$ and the constructed AND/OR graphs are shown in Figure~\ref{fig:sat_to_andor_example}. 

Then, we can derive the following result.
\begin{theorem}
Given arbitrary AND/OR DAG rooted at $x$, computing the true proof and disproof number for $x$ is NP-hard.
\end{theorem}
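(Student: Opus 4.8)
The plan is to combine Theorem~\ref{theorem:sat_to_and_or_graph} with the NP-completeness of SAT. The reduction described there already maps a SAT instance to an AND/OR DAG, so the only remaining work is to (i) confirm that the constructed graph is built in polynomial time and (ii) argue precisely that the true proof number of the root equals a prescribed threshold exactly when the instance is satisfiable. Granting both, any polynomial-time procedure for the true proof number would decide SAT in polynomial time; since SAT is NP-hard, this forces the proof-number computation to be NP-hard, and the disproof case follows by the dual (reverted) construction.

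First I would check the polynomial bound. For a formula with $n$ variables and $k$ clauses of total length $m$, the construction of Theorem~\ref{theorem:sat_to_and_or_graph} produces one AND node $P$, $n+k$ OR nodes, and $2n$ leaf nodes, with the number of edges bounded by $2n+m$. Hence the graph has size linear in the encoding of the formula and is produced in polynomial time, so the map is a genuine polynomial-time reduction.

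Next I would pin down the threshold. Because the $n$ variable gadgets $X_j$ are OR nodes whose two leaf children $Tx_j, Fx_j$ are disjoint across distinct $j$, proving the AND node $P$ forces at least one leaf from each of the $n$ disjoint pairs; thus the true proof number is always at least $n$. The crux is equality: a solution of size exactly $n$ must select precisely one leaf per variable gadget, which is nothing but a truth assignment, and by construction this set also proves every clause node $C_i$ if and only if the chosen assignment satisfies every clause. Therefore the true proof number of $P$ equals $n$ if and only if the instance is satisfiable, and comparing the computed value against $n$ decides SAT.

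I expect the main obstacle to be this tightness argument rather than the reduction itself: one must rule out a cheaper-looking proof that reuses a clause literal to serve a variable gadget without corresponding to a consistent assignment. The point to drive home is that although leaves may be shared in the DAG, sharing $Tx_j$ between $X_j$ and some clause $C_i$ still costs exactly one unit and commits $x_j$ to \emph{true}; consequently any minimal cover of cost $n$ induces a well-defined consistent assignment, and the cost exceeds $n$ exactly when no assignment satisfies all clauses. For the disproof number I would repeat the analysis on the reverted graph, where the roles of AND and OR swap and the analogous threshold characterizes \emph{unsatisfiability}, as already sketched in the proof of Theorem~\ref{theorem:sat_to_and_or_graph}.
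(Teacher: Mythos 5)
Your proposal is correct and takes essentially the same route as the paper: invoke the reduction of Theorem~\ref{theorem:sat_to_and_or_graph}, note that the construction runs in polynomial time, and conclude NP-hardness of computing the true proof (resp.\ disproof) number from the NP-hardness of SAT. The paper's own proof is exactly this two-step argument; your extra work on the size bound and on the tightness of the threshold $n$ (that a cost-$n$ proof set picks exactly one leaf per variable gadget and hence is a satisfying assignment) only spells out what the paper delegates to Theorem~\ref{theorem:sat_to_and_or_graph}.
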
 \label{theorem:pns_np_hard}
\begin{proof}
The graph construction from SAT in Theorem~\eqref{theorem:sat_to_and_or_graph} is with polynomial time. It follows that computing proof or disproof number exactly in an arbitrary DAG is at least as difficult as finding if an SAT is satisfiable, thus NP-hard.
\end{proof}

\begin{figure}[tp]
	\centering
	\includegraphics[scale=0.5]{./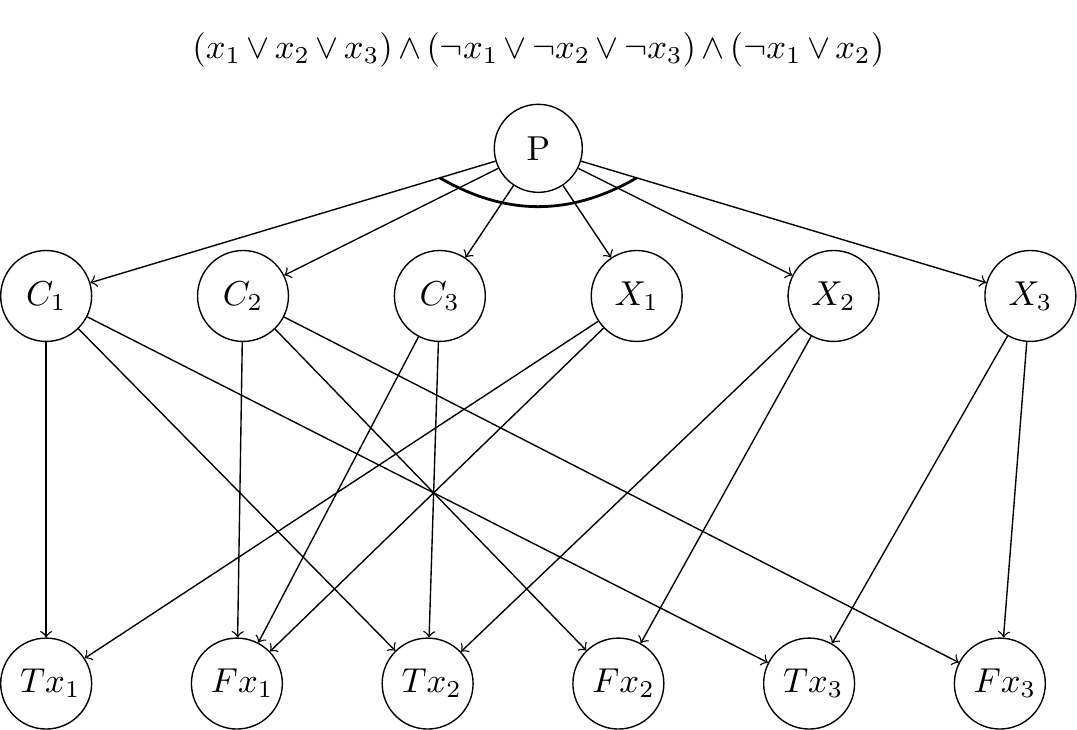}
	\includegraphics[scale=0.5]{./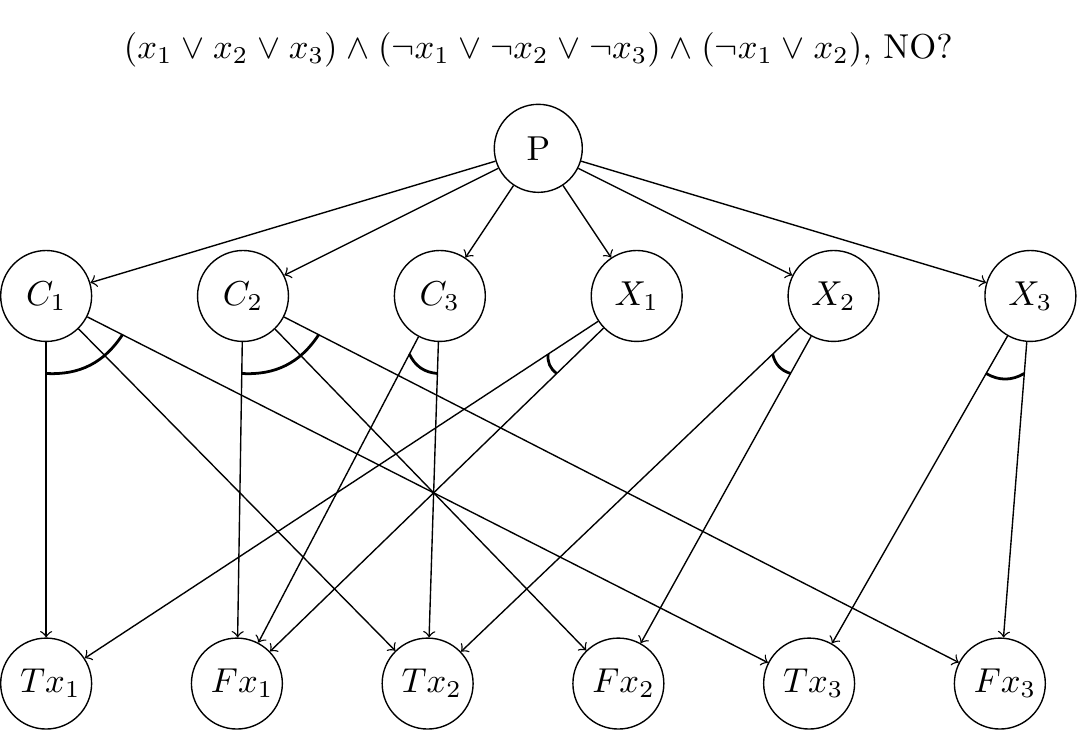}
	\caption{Above: Deciding whether this SAT instance is satisfiable can be reduced to finding the true proof number for $P$. Below: Deciding whether this SAT instance is unstatisfiable can be reduced to finding the true disproof number for $P$.}
	\label{fig:sat_to_andor_example}
\end{figure}

%


\section{Conclusions}
We proved that computing exact proof/disproof number for directed acyclic graphs is NP-hard. 
We expect our discussion could provide useful inspiration to future PNS developments, either for solving games or real-world AND/OR graphs.

\clearpage
\bibliographystyle{named}
\bibliography{ijcai21}

\end{document}